\def\E{\mathop{\mathbb E}}
\newcommand{\FB}{\mathcal{B}}
\newcommand{\FD}{\mathcal{D}}
\newcommand{\FH}{\mathcal{H}}
\newcommand{\leps}{\log{\frac{1}{\varepsilon}}}
\newcommand{\ceil}[1]{\left \lceil #1 \right \rceil}
\newcommand{\paren}[1]{\left(#1\right)}
\newcommand{\parensqr}[1]{\left[#1\right]}
\newcommand{\Prp}[1]{\Pr\left[#1\right]}
\newcommand{\Ep}[1]{\E\left[#1\right]}
\newcommand{\maxparen}[1]{\max{\left\{#1\right\}}}
\newcommand{\logchoose}[2]{\log{ {#1 \choose #2}}}
\newcommand{\logp}[1]{\log{\paren{#1}}}
\newtheorem{theorem}{Theorem}[section]
\newtheorem{lemma}[theorem]{Lemma}
\newtheorem{corollary}[theorem]{Corollary}
\newtheorem{claim}[theorem]{Claim}
\newtheorem*{thm:LowerBound}{\Cref{LowerBoundNumber} (Restated)}
\begin{document}

\title{Tight Bounds for Sliding Bloom Filters\footnote{Research supported in part by a grant from the I-CORE Program of the Planning and Budgeting Committee, the Israel Science Foundation and the Citi Foundation.}}
\author{
    Moni Naor\footnotemark[\value{footnote}]\thanks{Incumbent of the Judith
        Kleeman Professorial Chair. Research supported in part by a grant from the
        Israel Science Foundation. Department of Computer Science and
                Applied Mathematics, Weizmann Institute of Science, Rehovot 76100,
                Israel. Email: \texttt{moni.naor@weizmann.ac.il}.}
  \and
    Eylon Yogev\thanks{Department of Computer Science and
        Applied Mathematics, Weizmann Institute of Science, Rehovot 76100,
        Israel. Email: \texttt{eylon.yogev@weizmann.ac.il}.}
}

\pagestyle{plain}

\maketitle

\begin{abstract}
A Bloom filter is a method for reducing the space (memory) required for representing a set by allowing a small error probability. In this paper we consider a Sliding Bloom Filter: a data structure that, given a stream of elements, supports membership queries of the set of the last $n$ elements (a sliding window), while allowing a small error probability and a slackness parameter. 

The problem of sliding Bloom filters has appeared in the literature in several communities, but this work is the first theoretical investigation of it. 

We formally define the data structure and its relevant parameters and analyze the time and memory requirements needed to achieve them. We give a low space construction that runs in $O(1)$ time per update with high probability (that is, for all sequences with high probability all operations take constant time) and provide an almost matching lower bound on the space that shows that our construction has the best possible space consumption up to an additive lower order term.
\end{abstract}

\section{Introduction}
Given a stream of elements, we consider the task of determining whether an element has appeared in the last $n$ elements of the stream. To accomplish this task, one must maintain a representation of the last $n$ elements at each step. One issue, is that the memory required to represent them might be too large and hence an approximation is used. We formally define this approximation and completely characterize the space and time complexity needed for the task.

In 1970 Bloom~\cite{Bloom70} suggested an efficient data structure, known as the `\emph{Bloom filter}', for reducing the space required for representing a set $S$ by allowing a small error probability on membership queries. The problem is also known as the approximate membership problem (however, we refer to any solution simply as a `Bloom filter'). A solution is allowed an error probability of $\varepsilon$ for elements not in $S$ (false positives), but no errors for members of $S$. In this paper, we consider the task of efficiently maintaining a Bloom filter of the last $n$ elements (called `the sliding window') of a stream of elements.

We define an $(n,m,\varepsilon)$-\emph{Sliding Bloom Filter} as the task of maintaining a Bloom filter over the last $n$ elements. The answer on these elements must always be `Yes', the $m$ elements that appear prior to them have no restrictions ($m$ is a slackness parameter) and for any other element the answers must be `Yes' with probability at most $\varepsilon$. In case $m$ is infinite, all elements prior to the current window have no restrictions. In this case we write for short $(n,\varepsilon)$-Sliding Bloom Filter.

The problem was studied in several communities and various solutions were suggested. In this paper, we focus on a theoretical analysis of the problem and provide a rigorous analysis of the space and time needed for solving the task. We construct a Sliding Bloom Filter with $O(1)$ query and update time, where the running time is worst case with high probability (see the theorems in \Cref{sec:our} for precise definitions) and has near optimal space consumption. We prove a matching space lower bound that is tight with our construction up to an additive lower order term. Roughly speaking, our main result is figuring out the first two terms of the space required by a Sliding Bloom Filter: $n\leps + n \cdot \maxparen{\log{\leps},\log{\frac{n}{m}}}$

A simple solution to the task is to partition the window into blocks of size $m$ and for each block maintain its own Bloom filter. This results in maintaining $\ceil{\frac{n}{m}+1}$ Bloom filters. To determine if an element appeared or not we query all the Bloom filters and answer `Yes' if any of them answered positively. There are immediate drawbacks of this solution, even assuming the Bloom filters are optimal in space and time:
\begin{itemize}
\item Slow query time: $\ceil{\frac{n}{m}+1}$ Bloom filter lookups.
\item High error probability: since an error can occur on each block, to achieve an effective error probability of $\varepsilon$ we need to set each Bloom filter to have error $\varepsilon'=\frac{\varepsilon m}{n+m}$, which means that the total space used has to grow (relative to a simple Bloom filter) by roughly $n \log \frac{n+m}{m}$ bits (see \Cref{related}).
\item Sub-optimal space consumption for large $m$: the first two drawbacks are acute for small $m$, but when $m$ is large, say $m=n$, then each block is large which results in a large portion of the memory being `wasted' on old elements.
\end{itemize}
We overcome all of the above drawbacks: the query time is always constant and for \emph{any} $m$ the space consumption is nearly optimal.

Sliding Bloom Filters can be used in a wide range of applications and we discuss two settings where they are applicable and have been suggested. In one setting, Bloom filters are used to quickly determine whether an element is in a local web cache \cite{FanCAB00}, instead of querying the cache which may be slow. Since the cache has limited size, it usually stores the least recently used items (LRU policy). A Sliding Bloom Filter is used to represent the last $n$ elements used and thus, maintain a representation of the cache's contents at any point in time.

Another setting consists of the task of identifying duplicates in streams. In many cases, we consider the stream to be unbounded, which makes it impractical to store the entire data set and answer queries precisely and quickly. Instead, it may suffice to find duplicates over a sliding window while allowing some errors. In this case, a Sliding Bloom Filter (with $m$ set to infinity) suffices and in fact, we completely characterize the space complexity needed for this problem.

\subsection{Problem Definition}
Given a stream of elements $\sigma=x_1,x_2,...$ from a finite universe $U$ of size $u$, parameters $n$, $m$ and $\varepsilon$, such that $n < \varepsilon u $, we want to approximately represent a sliding window of the $n$ most recent elements of the stream. An algorithm $A$ is given the elements of the stream one by one, and does not have access to previous elements that were not stored explicitly. Let $\sigma_t=x_1,\dots,x_t$ be the first $t$ elements of the stream $\sigma$ and let $\sigma_t(k)=x_{\max{(0,t-k+1)}},\dots ,x_t$ be the last $k$ elements of the stream $\sigma_t$. At any step $t$ the current window is $\sigma_t(n)$ and the $m$ elements before them are $\sigma_{t-n}(m)$. If $m=\infty$ then define $\sigma_{t-n}(m)=x_1,\dots,x_{t-n}$. Denote $A(\sigma_t,x) \in \{\mbox{`Yes'},\mbox{`No'}\}$ the result of the algorithm on input $x$ given the stream $\sigma_t$. We call $A$ an $(n,m,\varepsilon)$-\emph{Sliding Bloom Filter} if for any $t \ge 1$ the following two conditions hold:
\begin{enumerate}
\item For any $x \in \sigma_t(n)$: $\Pr[A(x)= \mbox{`Yes'}] = 1$
\item For any $x \notin \sigma_t(n+m): \Pr[A(x)= \mbox{`Yes'}] \le \varepsilon$
\end{enumerate}
where the probability is taken over the internal randomness of the algorithm $A$. Notice that for an element $x \in \sigma_{t-n}(m)$ the algorithm may answer arbitrarily (no restrictions). See Figure 1.

\begin{figure}[ht!]
\centering
\includegraphics[scale=0.35]{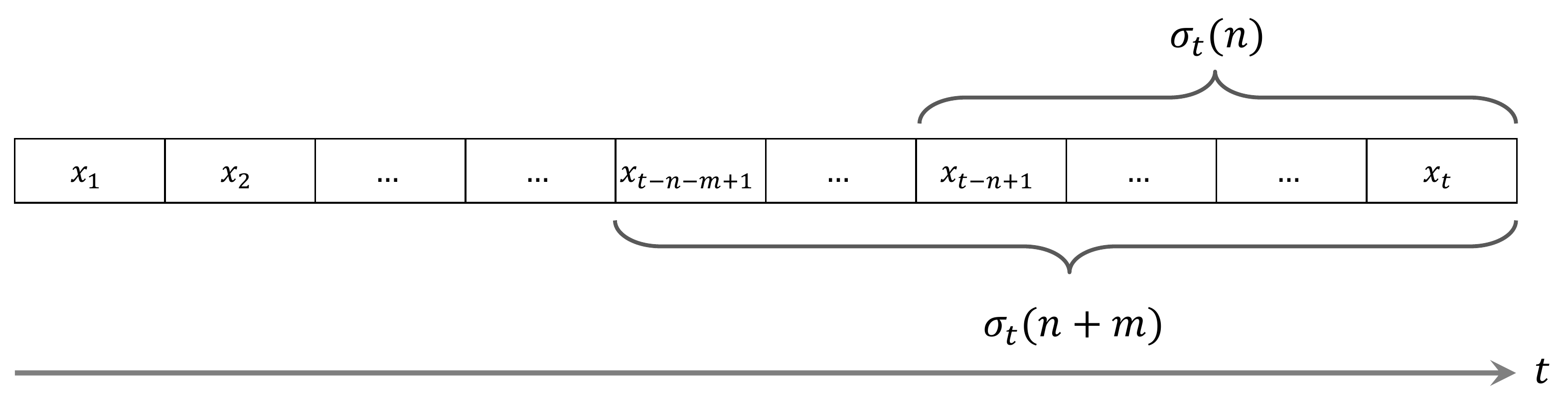}
\caption{The sliding window of the last $n$ and $n+m$ elements}
\label{overflow}
\end{figure}

An algorithm $A$ for solving the problem is measured by its memory consumption, the time it takes to process each element and answer a query. We denote by $|A|$ the maximum number of bits used by $A$ at any step. The model we consider is the unit cost RAM model in which the elements are taken from a universe of size $u$, and each element can be stored in a single word of length $w = \log{u} $ bits. Any operation in the standard instruction set can be executed in constant time on $w$-bit operands. This includes addition, subtraction, bitwise Boolean operations, left and right bit shifts by an arbitrarily number of positions, and multiplication. The unit cost RAM model is considered the standard model for the analysis of the efficiency of data structures.

An element not in $S$ on which the data structure accepts is called a false positive. At any point in time, the fraction of false positives in $U$ is called the false positive rate.

\subsection{Our Contributions}
\label{sec:our}
We provide tight upper and lower bounds to the $(n,m,\varepsilon)$-problem. In fact, we achieve space optimality up to the second term. Our first contribution is a construction of an efficient Sliding Bloom Filter: it has query time $O(1)$ worst case and update time $O(1)$ worst case with high probability, for the entire sequence. For $\varepsilon=o(1)$ the space consumption is near optimal: the two leading terms are optimal in constants.

\begin{theorem}\label{UpperBoundNumber}
For any $m>0$, and sufficiently large $n$ there exists an $(n,m,\varepsilon)$-Sliding Bloom Filter having the following space and time complexity on a unit cost RAM:

\begin{description}
\item \textbf{Time}: Query time is $O(1)$ worst case. For any polynomial $p(n)$ and sequence of at most $p(n)$ operations, with probability at least $1-1/p(n)$, over the internal randomness of the data structure, all insertions are performed in time $O(1)$ worst case.
\item
\textbf{Space}: the space consumption is: $(1+o(1))\paren{n\leps + n \cdot \maxparen{\log{\frac{n}{m}},\log{\leps}}}$.\\
In particular, for constant error $\varepsilon$ we get that the space consumption is: $n\logp{\frac{n}{m}} + O(n)$. Otherwise, for sub-constant $\varepsilon$ that satisfies $\varepsilon = 2^{-O(\log^{1/3}{n})}$ we get that:
\begin{enumerate}
\item If  $m \ge \varepsilon n$ then the space consumption is: $n\leps + n \cdot \maxparen{\log{\frac{n}{m}},\log{\leps}} + O(n)$
\item If $m < \varepsilon n$ then the space consumption is: $n\leps + (1+o(1))n\log{\frac{n}{m}}$
\end{enumerate}
\end{description}

\end{theorem}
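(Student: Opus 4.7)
My plan is to build the filter from a single dynamic dictionary $D$ that stores, for each element $x$ of the current window, a key of the form $(f(x),\tau(x))$. Here $f$ is a hash into $[n]\times[1/\varepsilon]$ so that two distinct universe elements collide on $f$ with probability $\varepsilon/n$, and $\tau(x)\in[G]$ is a ``generation tag'' obtained by partitioning the stream into consecutive blocks of length $g$ (to be fixed) and labelling blocks cyclically from a pool of size $G=\Theta(n/g)$. Queries use $D$ directly: answer \emph{Yes} iff $(f(y),\tau)\in D$ for some currently live tag $\tau$. Because $D$ holds at most $N=(1+o(1))n$ pairs and each pair matches a fresh $y$ with probability $\varepsilon/n$, the false-positive rate is at most $\varepsilon$ by a union bound; elements still inside the window are present with their live tag, so they are always reported, verifying both filter conditions.

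\textbf{Space and parameter choice.} Plugging in an Arbitman--Naor--Segev style succinct dictionary, storing $N$ keys from a universe of size $U$ uses $\log\binom{U}{N}+O(N)$ bits. With $U=n\cdot 2^{\leps}\cdot G$ this is
\[
N\leps \;+\; N\log G \;+\; N\log(n/N) \;+\; O(N).
\]
I would set $g=\min\{m,\,n/\leps\}$, giving $G=\Theta(n/g)\in\{\Theta(n/m),\,\Theta(\leps)\}$ according to which branch of the $\min$ is active. Under the hypothesis $\varepsilon=2^{-O(\log^{1/3}n)}$, when $m\ge\varepsilon n$ we have $g\ll n$ and hence $N=(1+o(1))n$, so the bound collapses to $n\leps + n\cdot\maxparen{\log(n/m),\log\leps}+O(n)$. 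When $m<\varepsilon n$ the term $n\log(n/m)$ absorbs the $O(n)$ slack, yielding $n\leps+(1+o(1))n\log(n/m)$, which reproduces both cases of the theorem; the constant-error regime simply drops the $\leps$ factor and leaves the $n\log(n/m)+O(n)$ form stated.

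\textbf{Time and main obstacle.} Query and single-insertion time are $O(1)$ worst case with high probability, inherited directly from the succinct dictionary's guarantee on adversarial polynomial-length streams. The genuinely new work is maintaining the invariant ``$D$ contains exactly the pairs of live blocks'' in worst-case constant time. I would piggyback deletions onto insertions: each new insertion also removes $\Theta(1)$ pairs belonging to the block currently being retired. Since a new block opens every $g$ insertions and each block holds at most $g$ pairs, a constant piggy-back rate fully purges the oldest block before its tag, taken modulo $G+O(1)$, is recycled by a fresh block; this is what keeps the ``$\tau$ is live'' test a $O(1)$-time comparison against two rolling counters. The hardest part will be tuning this de-amortization so that it simultaneously (i) bounds the instantaneous dictionary load $N$ at every single step, so the space bound holds pointwise and not just in amortized form, and (ii) composes with the dictionary's worst-case-w.h.p.\ guarantee, so that the combined cost of one user insertion plus the bounded number of scheduled cleanup deletions remains $O(1)$ worst case with high probability on adversarial streams of polynomial length.
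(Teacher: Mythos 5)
Your overall architecture is the same as the paper's (hash-reduce to exact membership, tag each element with a cyclically recycled generation, store everything in a succinct dynamic dictionary, pick the block length as $\min\{m,n/\leps\}$ so the tag costs $\maxparen{\log\frac{n}{m},\log\leps}$ bits per element), and your space accounting lands in the right place. But there are two genuine gaps, one of which you yourself flag as unresolved. First, by folding the generation tag into the \emph{key} $(f(x),\tau(x))$ and answering a query by asking whether $(f(y),\tau)\in D$ ``for some currently live tag $\tau$,'' you have made a single query cost one dictionary lookup per live tag. The number of live tags is $\Theta(\maxparen{\leps,\frac{n}{m}})$, which is super-constant for sub-constant $\varepsilon$, so your query time is not $O(1)$. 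The paper avoids this by keying only on $h(x)$ and storing the generation as \emph{associated data}: one lookup retrieves the tag, which is then compared against the live range in $O(1)$ time. (A side effect of tag-as-data is that re-insertions of the same element update a single record rather than spawning one record per generation, which is also what keeps the load at $n'=n(1+1/c)$ and the union bound at $\varepsilon$ after widening the hash range to $[n'/\varepsilon]$.)

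Second, the statement ``each new insertion also removes $\Theta(1)$ pairs belonging to the block currently being retired'' presupposes a way to \emph{locate} the members of the retiring block in constant time; maintaining per-block lists would cost $\Theta(n\log n)$ extra bits and destroy the space bound. This is exactly the technical content of the paper's proof that your proposal defers: the paper doubles the tag range to $2c+2$ generations to create an $n'$-step grace period, runs an incremental scan of the dictionary's \emph{cells} (two cells per insertion, so a full pass completes every $n'/2$ steps), and modifies the dictionary's internals so that any cell touched by an insertion is lazily purged if stale---this last point is needed because a concurrent scan can otherwise miss elements that the dictionary relocates, and it is why the dictionary can no longer be used as a black box. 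It also requires arguing that stale-but-not-yet-purged elements do not overflow the table (they are deleted on access, so an insertion always finds room among at most $n'$ active elements). Without specifying this mechanism, the worst-case $O(1)$ insertion claim, and the claim that the instantaneous load stays at $(1+o(1))n$ so the space bound holds pointwise, are not established. A smaller issue: the ANS dictionary gives $(1+o(1))(\FB+ns)$ bits, not $\FB+O(N)$; the hypothesis $\varepsilon=2^{-O(\log^{1/3}n)}$ exists precisely so that this multiplicative $o(1)$ overhead becomes additive $O(n)$, and your cleaner additive assumption skips the reason the theorem is split into its two sub-cases.
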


The challenge we face is achieving constant time operations while space consumption remains very tight. In designing our algorithm we assemble ideas from several previous works along with new ones. The basic skeleton of the algorithm shares ideas with the work of Zhang and Guan~\cite{ZhangG08}, however, their algorithm is based on the traditional Bloom filter and has immediate drawbacks: running time is super-constant and the space is far from optimal. To get an error probability of $\varepsilon$ they use $M=O(n\log{n}\leps)$ bits, and moreover this is assuming the availability of truly random hash functions.

Thorup~\cite{Thorup11} considered a similar data structure of hash tables with timeouts based on linear probing. He did not allow any error probability nor any slackness (i.e. $\varepsilon=0$ and $m=0$ in our terminology). The query time, as is general for linear probing, is only constant in expectation, and the space is only optimal within a constant factor.

Pagh, Pagh and Rao~\cite{PaghPR05} showed that the traditional construction of a Bloom filter can be replaced with a construction that is based on dictionaries. The dictionary based Bloom filter has the advantage that its running time and space consumption are completely determined by the dictionary itself, and it does not assume availability of truly random functions. Given the developments in succinct dictionaries, using this alternative has become more appealing.

Our algorithm is conceptually similar to the work of Zhang and Guan. However, we replace the traditional implementation of the Bloom filter with a dictionary based one. As the underlying dictionary, we use the state of the art dictionary given by Arbitman, Naor and Segev~\cite{ArbitmanNS10}, known as Backyard Cuckoo Hashing. Then we apply a similar method of lazy deletions as used by Thorup on the Backyard Cuckoo Hashing dictionary. Moreover, we introduce a slackness parameter $m$ and instead of storing the exact index of each element we show a trade-off parameter $c$ between the accuracy of the index stored and the number of elements we store in the dictionary. Optimizing $c$ along with the combined methods described gives us the desired result: constant running time, space consumption of nearly $n\leps + n \cdot \maxparen{\log{\leps},\log{\frac{n}{m}}}$ which is optimal in both leading constants and no assumption on the availability of truly random functions. We inherit the implementation complexity of the dictionary, and given an implementation of one, it is relatively simple to complete the algorithm's implementation.

Our second contribution, and technically the more involved one, is a matching space lower bound. We prove that if $\varepsilon=o(1)$ then any Sliding Bloom Filter must use space that is within an additive low order term of the space of our construction, regardless of its running time.

\begin{theorem}\label{LowerBoundNumber}
Let $A$ be an $(n,m,\varepsilon)$-Sliding Bloom Filter where $n < \varepsilon u$, then
\begin{enumerate}
\item If $m>0$ then $|A| \ge n\leps + n \cdot \maxparen{\log{\frac{n}{m}},\log{\leps}} - O(n)$
\item If $m=\infty$ then $|A| \ge n\leps + n\log{\leps} - O(n)$
\end{enumerate}
\end{theorem}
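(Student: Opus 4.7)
The plan is an information-theoretic encoding argument: I exhibit a large family $\mathcal{F}$ of adversarial input streams and show that two streams in $\mathcal{F}$ whose ``generation structure'' differs sufficiently must force the data structure into distinct states, which yields $|A| \ge \log|\mathcal{F}| - O(n)$ by counting.

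For the finite-$m$ case, let $k = \floor{n/m}$ and let $\mathcal{F}$ be the set of ordered tuples $(G_1, \ldots, G_k)$ of pairwise disjoint $m$-subsets of $U$; each tuple induces the stream $G_1 G_2 \cdots G_k$ in a canonical within-group order. Counting gives $|\mathcal{F}| = u!/((u-n)!(m!)^k)$, and therefore $\log|\mathcal{F}| = n\log(u/m) - O(n) \ge n\leps + n\log(n/m) - O(n)$, using the hypothesis $u > n/\varepsilon$. The distinguishing step is the following: suppose $\sigma, \sigma' \in \mathcal{F}$ differ by transposing an element $x$ between group $j$ (in $\sigma$) and group $j'$ (in $\sigma'$) with $j' \ge j+2$. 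At time $t = n + (j+1)m$, along $\sigma$ the element $x$ has left both the window and the buffer and so must be rejected with probability $\ge 1-\varepsilon$, whereas along $\sigma'$ the element $x$ is still inside the window and must be accepted with probability $1$. Consequently $D_n(\sigma) \ne D_n(\sigma')$ except on an $O(\varepsilon)$-fraction of the DS's internal randomness. State collapses can therefore occur only within equivalence classes of partitions that differ by local moves (each element shifting by at most one group); each such class has size at most $3^n$, and dividing yields $|A| \ge n\leps + n\log(n/m) - O(n)$.

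For the $m=\infty$ case, and for the $\log\leps$ alternative in Case~1's max (which dominates when $m \gtrsim n/\leps$), the forced-rejection step above fails directly because past elements may be freely accepted. The workaround is to convert the storage budget into an \emph{effective slackness} $m' := n/\leps$: a state of $|A|$ bits can represent an accept set of size at most $O(|A|/\leps)$ with false-positive rate $\varepsilon$, so if $|A|$ is close to the claimed bound then the DS cannot lazily retain more than $O(m')$ stale elements in its accept set and is effectively forced to evict within $m'$ time steps. Replaying the Case~1 argument with generations of size $m'$ (so $k = \leps$) then contributes the extra $n\log(n/m') = n\log\leps$ term, matching the theorem in both cases.

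The main obstacle will be the $m=\infty$ case: the ``storage implies effective slackness'' implication must be established uniformly in time, ruling out an adversarial DS that concentrates its laziness at particular time snapshots while being tight at others. A secondary obstacle is the standard probabilistic bookkeeping for the DS's internal randomness (handled by Yao's principle or by averaging over the random seed) and the accounting of the $\varepsilon u$ false-positive noise that accumulates across the $k$ snapshots used in the distinguishing step; I expect these to be routine.
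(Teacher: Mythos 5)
Your finite-$m$ argument is a plausible (if incomplete) route to the $n\leps+n\log\frac{n}{m}$ branch, but the proposal does not prove the theorem: the $n\log\leps$ branch of the max --- which is the binding one whenever $m\gtrsim n/\leps$, and is the entirety of Case~2 --- rests on the ``storage implies effective slackness'' claim, and that claim is exactly the part you defer. As stated it is both circular (bounding the accept set of an $|A|$-bit state by $O(|A|/\leps)$ is itself an instance of the lower bound being proven) and insufficient even if granted: a data structure that retains few stale elements \emph{at each snapshot} can still retain \emph{different} stale elements at different times, or keep a handful of elements alive for arbitrarily long, so no uniform eviction deadline $m'$ follows, and your generation-of-size-$m'$ replay has nothing to stand on. The paper's resolution is the missing idea: it encodes the set \emph{and its ordering} jointly, introduces $\lambda=\sum_i\Delta(\sigma_k,i)$ (the total time by which elements outlive the window), and exploits a two-sided trade-off --- large $\lambda$ means some snapshot $A(\sigma_{j^*})$ certifies $n+\lambda/n$ members of $S$ (saving on the set encoding via $\logchoose{3\varepsilon u}{n+\lambda/n}$), while small $\lambda$ means the release times pin down the permutation to within $\logchoose{\lambda+n}{n}$ bits. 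Optimizing over $\lambda$ is what produces $\maxparen{\log\frac{n}{m}},\log{\leps}$; nothing in your proposal substitutes for it.

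Two secondary problems in the finite-$m$ part. First, your claim that state collapses occur only within classes of size $3^n$ implicitly assumes colliding streams have the same support; two streams over disjoint $n$-sets can share a state whenever the symmetric difference lands inside the state's false-positive set, which can have size $\Theta(\varepsilon u)\ge n$. The correct deduction from $\log|\mathcal{F}|$ is therefore $\log\binom{O(\varepsilon u)}{n}+O(n)$, not $n\log 3$ (the arithmetic happens to still close, but as written the step is wrong --- and it is precisely the $\logchoose{3\varepsilon u}{\cdot}$ term in the paper's encoding). Second, the randomness bookkeeping is not routine: after fixing a seed $r^*$ you lose all per-element probabilistic guarantees and must instead control the \emph{number} of false positives of $A_{r^*}$ at every relevant time simultaneously; the paper needs the ``absolute false positive assumption'' plus an entire subsection (a martingale argument over $w$ independent copies in split worlds) to remove it. Neither Yao's principle nor a naive union bound over the exponentially many stream pairs in your fooling set suffices.
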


From \Cref{UpperBoundNumber,LowerBoundNumber} we conclude that making $m$ larger than $n/\leps$ does not make sense: one gets the same result for any value in $[n/\leps, \infty)$.
When $m$ is small (less than $\varepsilon n$), then the dominant expression in both the upper and lower bounds is $n \logp{\frac{n}{m}}$.

The lower bound is proved by an encoding argument which is a common way of showing lower bounds in this area (see for example \cite{PaghSW12}). Specifically, the idea of the proof is to use $A$ to encode a set $S$ and a permutation $\pi$ on the set corresponding to the order of the elements in the set. We consider the number of steps from the point an element is inserted to $A$ to the first point where $A$ answers `No' on it, and we define $\lambda$ to be the sum of $n$ such lengths. If $\lambda$ is large, then there is a point where $A$ represents a large portion of $S$, which benefits in the encoding of $S$. If $\lambda$ is small, then $A$ can be used as an approximation of $\pi$, thus encoding $\pi$ precisely requires a small amount of bits. In either case, the encoding must be larger than the entropy lower bound\footnote{The entropy lower bound is base 2 logarithm of the size of the set of all possible inputs. In our case, all possible pairs $(S,\pi)$.} which yields a bound on the size of $A$. The optimal value of the trade-off between representing a larger set or representing a more accurate ordering is achieved by our construction. In this sense, our upper bound and lower bound match not only by `value' but also by `structure'.

\subsection{Related Work and Background}\label{related}
The data structure for the approximate set membership as suggested by Bloom in 1970 \cite{Bloom70} is relatively simple: it consists of a bit array which is initiated to `0' and $k$ random hash functions. Each element is mapped to $k$ locations in the bit array using the hash functions. To insert an element set all $k$ locations to 1. On lookup return `Yes' if all $k$ locations are 1. To achieve an error probability of $\varepsilon$ for a set of size $n$ Bloom showed that if $k=\leps$ then the length of the bit array should be roughly $1.44n\leps$ (where the 1.44 is an approximation of $\log_2(e)$). Since its introduction Bloom filters have been investigated extensively and many variants, implementations and applications have been suggested. We call any data structure that implements the approximate set membership a `Bloom filter'. A comprehensive survey (for its time) is Broder and Mitzenmacher~\cite{BroderM03}.

A lot of attention was devoted for determining the exact space and time requirements of the approximate set membership problem. Carter et al.~\cite{CarterFGMW78} proved an entropy lower bound of $n\leps$, when the universe $U$ is large. They also provided a reduction from approximate membership to {\em exact} membership, which we use in our construction. The retrieval problem associates additional data with each element of the set. In the static setting, where the elements are fixed and given in advance, Dietzfelbinger and Pagh propose a reduction from the retrieval problem to approximate membership \cite{DietzfelbingerP08}. Their construction gets arbitrarily close to the entropy lower bound.

In the dynamic case, Lovett and Porat~\cite{LovettP10} proved that the entropy lower bound cannot be achieved for any {\em constant} error rate. They show a lower bound of $C(\varepsilon)\cdot n\leps$ where $C(\varepsilon) > 1$ depends only on $\varepsilon$. Pagh, Segev and Wieder~\cite{PaghSW12} showed that if the size $n$ is not known in advance then at least $(1-o(1))n\leps + \Omega(n\log{\log{n}})$ bits of space must be used. The Sliding Bloom Filter is in particular also a Bloom Filter in a dynamic setting, thus the \cite{LovettP10} and~\cite{PaghSW12} bounds are applicable.

As discussed, Pagh, Pagh and Rao \cite{PaghPR05} suggested an alternative construction for the Bloom filter. They used the reduction of Carter et al.\ to improve the traditional Bloom filter in several ways: Lookup time becomes $O(1)$ independent of $\varepsilon$, has succinct space consumption, uses explicit hash functions and supports deletion. In the dynamic setting for a constant $\varepsilon$ we do not know what is the leading term in the memory needed, however, for any sub-constant $\varepsilon$ we know that the leading term is $n \leps$: Arbitman, Naor and Segev present a solution, called `Backyard Cuckoo Hashing', which is optimal up to an additive lower order term (i.e., it is a succinct representation) \cite{ArbitmanNS10}. Thus, in this paper we focus on sub-constant $\varepsilon$.

The model of sliding windows was first introduced by Datar et al.~\cite{DatarGIM02}. They consider maintaining an approximation of a statistic over a sliding window. They provide an efficient algorithm along with a matching lower bound.

Data structures for problems similar to the Sliding Bloom Filters have been studied in the literature quite extensively over the past years. The simple solution using $m=n$ consists of two large Bloom filters which are used alternatively. This method known as \emph{double buffering} was proposed for classifying packets caches \cite{ChangLF04}. Yoon~\cite{Yoon10} improved this method by using the two buffers simultaneously to increase the capacity of the data structure. Deng and Rafiei~\cite{DengR06} introduced the Stable Bloom filter and used it to approximately detect duplicates in stream. Instead of a bit array they use an array of counters and to insert an element they set all associated counters to the maximal value. At each step, they randomly choose counters to decrease and hence older element have higher probability of being decreased and eventually evicted over time. Metwally et al.~\cite{MetwallyAA05} showed how to use Bloom filters to identify duplicates in click streams. They considered three models: Sliding Windows, Landmark Windows and Jumping Windows and discuss their relations. A comprehensive survey including many variations is given by Tarkoma et al.~\cite{TarkomaRL12}. However, as far as we can tell, no formal definition of a Sliding Bloom Filter as well as a rigorous analysis of its space and time complexity, appeared before.

\section{The Construction of a Succinct Sliding Bloom Filter}\label{sec:contruction}

Our algorithm uses a combination of transforming the approximate membership problem to the exact membership problem plus a solution to the retrieval problem. On an input $x$, we store $h(x)$, for some hash function $h$, in a dynamic dictionary and in addition store some information on the last time where $x$ appeared. We consider the stream to be divided into generations of size $n/c$ each, where $c$ is a parameter that will be optimized later. The first $n/c$ elements are generation 1, the next $n/c$ elements are generation 2 etc. The current window contains the last $n$ elements and consists of at most $c+1$ different generations. Therefore, at each step, we maintain a set $S$ that represents the last $c+1$ generations (that is, at most $n+n/c$ elements) and count the generations mod $(c+1)$. In addition to storing $h(x)$, we associate $s=\logp{c+1}$ bits indicating the generation of $x$. Every $n/c$ steps, we delete elements associated with the oldest generation. We adjust $c$ to optimize the space consumption while requiring $n/c \le m$.

In this section, we describe the algorithm in more detail. We first present the transformation from approximate to exact membership (Section 2.1). We define a dynamic dictionary and the properties we need from it in order to implement our algorithm (Section 2.3). Then, we describe the algorithm in two stages, using any dictionary as a black box. The memory consumption is merely the memory of the dictionary and therefore we use one with succinct representation. At first, in Section 2.3, the running time will not be optimal and depend on $c$ (which is not a constant), even if we use optimal dictionaries. Then, in Section 2.4, we describe how to eliminate the dependency on $c$ as well as deamortizing the algorithm, making the running time constant for each operation. This includes augmenting the dictionary, and thus it can no longer be treated as a black box. We prove correctness and analyze the resulting memory consumption and running time.

\subsection{Approximate Membership and Exact Membership}\label{sec:approxToExact}
Carter et al.~\cite{CarterFGMW78} showed a transformation from approximate membership to exact membership that works as follows. We want to represent a set $S$ of size $n$ and support membership queries in the following manner: For a query on $x \in S$ we answer `Yes' and for $x \notin S$ we answer `Yes' with probability at most $\varepsilon$. Choose a hash function $h \in \FH$ from a universal family of hash functions mapping $U \rightarrow \left [n /\varepsilon \right]$. Then for any $S$ of size at most $n$ it holds that for any $x \in U$:
$$
\Pr_h[h(x) \in h(S)] \le \sum\limits_{y \in S}\Pr_h[h(x)=h(y)] \le n\frac{\varepsilon}{n}=\varepsilon
$$
where the first inequality comes from a union bound and the second from the definition of a universal hash family. This implies that storing $h(S)$ suffices for solving the approximate membership problem. This dictionary-based construction and the traditional construction can be viewed as lying on a spectrum - the former writes many bits in one location, whereas the latter writes one bit in many locations.

To store $h(S)$ we use an exact dictionary $\FD$, which supports \texttt{insert} (including associated data), \texttt{delete} and \texttt{update} procedures (the update procedure can be simulated by a delete followed by an insert). While most dictionaries support these basic procedures, we require $\FD$ to additionally support the ability of \emph{scanning}. We further discuss these properties in the next section.\\

\noindent
\textbf{Number of false positives}:
We note that in addition to the error bound on each element, we can bound the total number of false positives in the universe. Any hash family $\FH$ divides the universe to $\ceil{n/\varepsilon}$ `bins', and the number of false positives is the total number of elements in any bin containing an element from $S$. If $\FH$ divides $U$ to (roughly) equally sized bins, each of size at most $\ceil{\varepsilon u/n}$, then the total number of false positives is at most $n \cdot \ceil{\varepsilon u/n} \le \varepsilon u + n$. A simple example of such a hash family can be obtained by choosing a prime $p \ge u$ then defining $\FH$ to be $h_{a}(x)=((ax \mod{p}) \mod{\ceil{n/\varepsilon}})$, where $a$ is a random integer modulo $p$ with $a \ne 0$ \cite{CarterW79}. In this case, the bound holds with certainty for \emph{any} function $h \in \FH$. This property is not guaranteed by the traditional construction of Bloom, and we further discuss it in \Cref{sec:lowebound}.

\subsection{Succinct Dynamic Dictionary}
The information-theoretic lower bound on the minimum number of bits needed to represent a set $S$ of size $n$ out of $M$ different elements is $\FB=\FB(M,n) = \ceil{\log{{M \choose n}}} = n\log{M} - n\log{n} + O(n)$. A succinct representation is one that uses $(1+o(1))\FB$ bits \cite{Dem07}. A significant amount of work was devoted for constructing dynamic dictionaries over the years and most of them are appropriate for our construction. Some have good theoretical results and some emphasize the actual implementation. In order for the reduction to compete with the Bloom filter construction (in terms of memory consumption) we must use a dynamic dictionary with succinct representation. There are several different definitions in the literature for a \emph{dynamic} dictionary. A static dictionary is a data structure storing a finite subset of a universe $U$, supporting only the \texttt{member} operation. In this paper, we refer to a dynamic dictionary where only an upper bound $n$ on the size of $S$ is given in advance and it supports the procedures \texttt{member}, \texttt{insert} and \texttt{delete}. The memory of the dictionary is measured with respect to the bound $n$.

In addition to storing $h(S)$, we assume $\FD$ supports associating data with each element. Specifically, we want to store $s$-bits of data with each element, where $s$ is fixed and known in advance. Finally, we assume the dictionary supports \emph{scanning}, that is, the ability to go over the associated data of all elements of the dictionary, and delete the element if needed. Using the scanning process, we scan the generations stored in the dictionary and delete elements of specific generations.

Several dynamic dictionaries can be used in our construction of a Sliding Bloom Filter. The running time and space consumption are directly inherited from the dictionary, making it an important choice. We use the `Backyard Cuckoo Hashing' construction of \cite{ArbitmanNS10} (but other alternative are possible). It supports \texttt{insert} and \texttt{delete} in $O(1)$ worst case with high probability while having a succinct representation. Implicitly in their work, they support associating any fixed number of bits and scanning. When $s$-bits of data are associated with each $x \in S$, the representation lower bound becomes $\FB + ns$ bits. For concreteness, the memory consumption of their dictionary is $(1+o(1))\paren{\FB + ns}$, where the $o(1)$ hides the expression $\frac{\log{\log{n}}}{\log^{1/3}{n}}$.

\subsection{An Algorithm with Dependency on $\varepsilon$}\label{Algorithm1}
Initiate a dynamic dictionary $\FD$ of size $n'=n\paren{1+\frac{1}{c}}$ as described above. Let $\FH = \{h:U \rightarrow [n'/\varepsilon]\}$ be a family of universal hash functions and pick $h \in \FH$ at random. At each step maintain a counter $\ell$ indicating the current generation and a counter $i$ indicating the current element in the generation. At every step $i$ is increased and every $n/c$ steps $i$ is reset back to 0 and $\ell$ is increased mod $(c+1)$.

To insert an element $x$ check if $h(x)$ exists in $\FD$. If not then insert $\langle h(x),\ell \rangle$ (insert $h(x)$ associated with $\ell$) into $\FD$. If $h(x)$ is in $\FD$, then update the associated data of $h(x)$ to $\ell$. Finally, update the counters $i$ and $\ell$. If $\ell$ has increased (which happens every $n/c$ steps) then \emph{scan} $\FD$ and delete all elements with associated data equal to the new value of $\ell$.

To query the data structure on an element $x$, return whether $h(x)$ is in $\FD$. See Algorithm 1 for pseudo-code of the insert and lookup procedures.

\begin{algorithm}
\caption {Pseudo-code of the Insert and Lookup procedures}
\texttt{Insert($x$):}
\begin{algorithmic}[1]

\If {$h(x)$ is a member of $\FD$}
	\State update $h(x)$ to have data $\ell$
\Else
	\State insert $\langle h(x),\ell \rangle$ into $\FD$
\EndIf
\State maintain counters $i$ and $\ell$
\If {the value of $\ell$ has changed}
	\State scan $\FD$ and delete elements of generation $\ell$
\EndIf
\newline
\end{algorithmic}

\texttt{Lookup($x$):}
\begin{algorithmic}[1]
\Procedure {member}{$x$}
\If {$h(x)$ is a member of $\FD$}
	\State return `Yes'
\Else
	\State return `No'
\EndIf
\EndProcedure
\end{algorithmic}
\end{algorithm}

\noindent
\textbf{Correctness}:
We first notice that $\FD$ is used correctly and never represents a set of size larger than $n'$. In each step we either insert an element to generation $\ell$ or move an existing element to generation $\ell$. In any case, each generation consists of at most $n/c$ elements in $\FD$. Each $n/c$ we evict a whole generation, assuring no more than $c+1$ generations are present in the dictionary at once. Thus, at most $n'$ elements are represented at any given step.

Next we prove that for any time $t$ the three conditions in \Cref{UpperBoundNumber} hold. The first condition follows directly from the algorithm. Assume $h(x)$ is inserted with associated generation $\ell=j$. Notice that its associated generation can only increase. $h(x)$ will be deleted only when $\ell$ completes a full cycle and its value is $j$ again, which takes at least $n$ steps. Thus, for any $x \in \sigma_t(n)$, $h(x)$ is in $\FD$ and the algorithm will always answer `Yes'.

For the second condition assume that $x \notin \sigma_t(n+m)$ and notice that $n+m$ is at least $c+1$ generations. Assume w.l.o.g.\ that $S=\{y_1,\dots ,y_{n'}\}$ ($S$ could have less than $n'$ elements) is the set of elements represented in $\FD$ at time $t$. Then $\Pr[h(x)=y_i]=\frac{\varepsilon}{n'} $ for all $i \in [n']$. Therefore, using a union bound we get that the total false positive probability is
$$
\Pr[A(x) = \mbox{`Yes'}] =
\Pr[h(x) \in h(S)] \le
\sum\limits_{i=1}^{n'}\Pr[h(x) = y_i] \le
\varepsilon
$$

\noindent
\textbf{Memory consumption:}
The bulk of memory is used for storing $\FD$. In addition, we need to store two counters $i$ and $\ell$ and the hash function $h$, which together take $O(\log{n})$ bits. $\FD$ stores $n'$ elements out of $M=[n'/\varepsilon]$ while associating each with $s=\log{c}$ bits. Using the `Backyard Cuckoo Hashing' dictionary yields a total space of
\begin{eqnarray*}
	\paren{1+o(1)}\paren{\FB\paren{\frac{n'}{\varepsilon}, n'} + n's} & = & \paren{1+o(1)} \cdot n \paren{1+\frac{1}{c}} \paren{ \leps + \log{c} + 1}
\end{eqnarray*}
We minimize this expression, as a function of $c$, and get that the minimum is at the solution to $c-\log{c}=\leps-1$. An approximate solution is $c=\leps$ and requiring that $n/c \le m$ yields that $c=\maxparen{\leps,m/n}$ and the total space is
$$
\paren{1+o(1)}\paren{n\leps + n \cdot \maxparen{\log{\leps},\log{\frac{n}{m}}}}
$$
as required. As mentioned, the $o(1)$ hides the term $\frac{\log{\log{n}}}{\log^{1/3}{n}}$, therefore if $\varepsilon = 2^{-{O(\frac{\log^{1/3}{n}}{\log{\log{n}}})}}$ then the product of $o(1)$ with $n \leps$ is $O(1)$. If $m \ge \varepsilon n$ then the product of $o(1)$ with $n \maxparen{\log{\leps},\logp{n/m}}$ is $O(1)$ as well. Thus, we can write the space consumption as:
$$n\leps + n \cdot \maxparen{\log{\leps},\log{\frac{n}{m}}} + O(n).$$
Otherwise, if $m < \varepsilon n$ then we can write it as:
$$n\leps + (1+o(1))n\log{\frac{n}{m}}.$$
If $\varepsilon=O(1)$ then $n\leps=O(n)$ and $n\log{\leps}=O(n)$ and we can write it as:
$$ n\log{\frac{n}{m}} + O(n).$$
\noindent
\textbf{Running time:}
Assume that $\FD$ supports $O(1)$ running time worst case for all procedures. The lookup procedure performs a single query to $\FD$ and hence always runs in $O(1)$. In the insert procedure, every $n/c$ steps, the value of $\ell$ is updated and we scan all elements in $\FD$ deleting old elements. For any other step, the running time is $O(1)$. Therefore, the total running time for $n/c$ steps is $O(n')$, which is $O(c)$ amortized running time. If $m \ge \leps$ then $c=\leps$ and the running time is $\leps$, otherwise it is $O\paren{\frac{n}{m}}$, which in both cases is not constant. We now show how to eliminate the large step, making the running time $O(1)$ worst case. Using the `Backyard Cuckoo Hashing' dictionary we get that the total running time including the dictionary's operations is $O(1)$ worst case with high probability (over internal randomness of the dictionary).

\subsection{Reducing the Running Time to Constant}

The main load of the algorithm of \Cref{Algorithm1} stems from the need to scan the entire dictionary to delete old elements. The issue we have to deal with in order to reduce the time of each step to be constant (independent of $\varepsilon$) is that the scanning operation is done too many times: each $n/c$ steps we scan the dictionary which is a total of $O(cn)$ operations over $n$ steps. Another drawback of that algorithm is that the scanning process is all done in one step, hence we can get only an amortized result. We modify the algorithm to solve these two issues simultaneously: only one scanning is performed every $n'$ steps and the scanning processes is deamortized and spread over these $n'$ steps.

The first modification is to extend the range of the generations counter $\ell$ to loop between $0$ and $2c+2$ (instead of between 0 and $c+1$). This lets us distinguish between elements of the last $2c+2$ generation and enables a window of size $n'$ to delete old elements before the counter overrides them with a new generation. At any moment, only the $c+1$ recent generations are considered active and the rest slated to be deleted.

The second modification is to combine many scanning processes to one, which is spread over a sequence of $n'$ steps. The scanning process needs to support running in small steps while allowing other operations to run concurrently. We should be able to save its state, then allow other operations to run and finally resume its state and continue the scanning process. Instead of scanning all the $n'$ elements in one step, we scan two elements at each step and save the scanning index such that we are able to continue from that point. Thus, after $n'/2$ steps all $n'$ elements of the dictionary are scanned.

These modifications raise two new problems with which we need to deal. First, the dictionary is initialized to be of size $n'$ and since we do not delete old elements immediately, there might be more than $n'$ elements present in the dictionary. Notice, however, that the number of \emph{active} elements present will never exceed $n'$. Second, since the scanning is done in small steps concurrently with other operations, it might miss elements the have been moved by other operations. In any case the dictionary has been modified, the scanning process should succeed in scanning all elements nevertheless.

To solve this, we need the dictionary to be able to consider non-active elements as deleted such that they do not interfere with other operations: whenever a non-active element is encountered it is simply deleted. Supporting this requires some additional properties from the dictionary. Later, for concreteness, we describe how to modify the `Backyard Cuckoo Hashing' dictionary to support these properties.

It is not clear whether all dictionaries can be modified to support this property, since the dictionary might have some implicit representation of various elements using the same memory space. However, the property can be supported assuming each element has a unique memory space in which it is  represented, called a `cell'; we do not assume that the dictionary is `systematic', i.e. that the string encodes the element directly, but rather that as in `traditional' hash tables the content of the cell plus its location and some other easily accessible information determine the element uniquely. We assume that given a cell, we can figure out the associated data with the element of the cell and delete the element of this cell from the dictionary. An insert or delete procedure may modify a constant number of cells. Elements of cells which were accessed are called the accessed elements. We assume the cells have some order in which we can scan them and save an index indicating the state of the scanning process using $o(n)$ bits of memory (actually it is $O(\log{n})$).

Assuming the dictionary supports these properties, we can modify its lookup and insert procedures to check whether any accessed element needs to be deleted. For example, an insert procedure may move an element from one cell to another, which was already scanned. Thus, before moving or changing a cell we scan it and delete it if it's old. This way, each element is scanned either by the scanning process or by an insert or lookup procedure. Moreover, we change the Lookup procedure to return `Yes' on input $x$ only if $h(x)$ exists in $\FD$ \textbf{and} its associated generation is active.

A cell occupied by an old element, will be deleted whenever accessed, thus effectively not occupying space in dictionary. Since elements might be deleted only after $n'/2$ steps, it could be the case that more than $n'$ elements are present in the dictionary. However, this way, the old elements do not interfere: when an old item is encountered during the insertion it is deleted, as described above. Hence, effectively when an item is inserted the data structure has at most $n'$ elements and it will have a valid place.

We discuss implementing these requirements in the `Backyard Cuckoo Hashing' construction (see pseudo-code in Figure 2 of their paper). Their hashing scheme is based on two-level hashing, the first level consists of an array $T_0$ of bins of size $d$ and the second level consists of Cuckoo hashing which includes two arrays, $T_1$ and $T_2$ and a queue, $Q$. The cells are the $d$ cells in each bin of $T_0$, the cells of $T_1$, $T_2$ and $Q$. Each element is implicitly stored in a unique cell in one of the components.

Scanning the cells is achieved by going over the cells of each component and saving an index of the current component and cell within the component. The lookup and delete procedures are simple and does not involve moving cells. The insert procedure is more involved and may move cells from one component to another, e.g. a cell from $Q$ might be moved to $T_0$. Since the running time is constant, so is the number of accessed elements. The procedure can be easily modified such that before \emph{any} cell is accessed it is first scanned, and deleted if old. If there are less than $n'$ \emph{active} elements in the dictionary, then an insert operation will succeed, removing old elements if required. After these modifications, the `Backyard Cuckoo Hashing' dictionary supports all needed requirements for our construction of an $(n,\varepsilon)$-Sliding Bloom Filter.

We analyze the running time of the modified construction. At each step, we scan two elements and delete them if necessary. The delete operation always takes constant time. The insert procedure was modified to delete old accessed elements when encountered. Since the insert operation takes constant time in worst case with high probability, then with the same probability, it will access only a constant number of cells. Hence, deleting accessed elements will increase the running time, but it will remain a constant. Similarly, the modified lookup procedure also remains constant. Overall, all operations remain constant in the worst case, where the insert operation has constant running time, with high probability. This completes the proof of \Cref{UpperBoundNumber}.

\section{A Tight Space Lower Bound}\label{sec:lowebound}
In this section we present a matching space lower bound to our construction. For simplicity, we first introduce what we call the `absolute false positive assumption'. We define it and use it in the proof of \Cref{sec:lower}, and in \Cref{RemoveAssumption} we show how to get the same lower bound without it.

Recall that at any point in time, the false positive rate is the fraction of false positive elements. According to the definition of a Sliding Bloom Filter, we are not assured that there are no `bad' points in time where the false positive rate is much higher than its expectation, and in fact it could get as high as $1$.

We call the property that throughout the lifetime of the data structure at {\em all} points in time the false positive rate is at most $\varepsilon$ the \emph{absolute false positive assumption}. This assumption is a desirable property from a Sliding Bloom Filter and reasonable constructions, including ours\footnote{See discussion at \Cref{sec:approxToExact}} in \Cref{sec:contruction}, enjoy it.

An (artificial) example of a Sliding Bloom Filter for which the assumption does not hold can be obtained by taking any $(n,\varepsilon)$-Sliding Bloom Filter and modifying it such that it chooses a random index $k \in [1,n]$ and at step $k$ of the stream it always answers `Yes'. This results in an $(n,\varepsilon + \frac{1}{n})$-Sliding Bloom Filter in which there will \emph{always} be some point at which the false positive rate is high.

\subsection{Proof Under the Absolute False Positive Assumption}\label{sec:lower}

\begin{theorem}\label{LoweBoundAssumptionNumber}
Let $A$ be an $(n,m,\varepsilon)$-Sliding Bloom Filter where $n < \varepsilon u$. If for any stream $\sigma$ it holds that
$$\Pr[\exists i \le 3n: |\{x \in U: A(\sigma_i,x)=`Yes'\}| \ge n + 2\varepsilon u] \le \frac{1}{2}$$
then
\begin{enumerate}
\item If $m>0$ then $|A| \ge n\leps + n \cdot \maxparen{\log{\frac{n}{m}},\log{\leps}} - O(n)$
\item If $m=\infty$ then $|A| \ge n\leps + n\log{\leps} - O(n)$
\end{enumerate}
\end{theorem}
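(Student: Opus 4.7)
My plan is an encoding argument with a case split on a ``persistence'' quantity, following the dichotomy hinted at in the introduction. I draw a uniformly random stream $\sigma = (x_1, \dots, x_N)$ of $N \le 3n$ distinct elements of $U$; the pair $(S, \pi)$ of (element-set, insertion-order) has entropy $\log(u(u-1) \cdots (u-N+1)) \ge N\log u - O(N)$. Conditioning on the absolute false positive event (which has probability $\ge 1/2$ by hypothesis) gives $|Y_t| \le n + 2\varepsilon u$ at every $t \le 3n$, where $Y_t := \{y : A(\sigma_t, y) = \text{Yes}\}$. For each stream index $i$ I set $\ell_i := \tau_i - i \ge n$ with $\tau_i$ the first No time on $x_i$, and let $\lambda := \sum_{i \in \text{last window}} (\ell_i - n)$. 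I fix a bucket size $b := \min\{m, n/\leps\}$ (taking $b := n/\leps$ in the $m = \infty$ case).

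Case A, $\lambda$ large: since $\sum_i \ell_i = \sum_t |\{i : x_i \in Y_t\}|$, a pigeonhole step produces some $t^* \le 3n$ at which strictly more than $n$ stream elements lie in $Y_{t^*}$, say $k$ of them. Then the state $A(\sigma_{t^*})$, together with $\log\binom{n + 2\varepsilon u}{k}$ aux bits identifying these $k$ stream members of $Y_{t^*}$ and $\log\binom{u - n - 2\varepsilon u}{N - k}$ aux bits for the remaining $N - k$ members of $S$, is an injective encoding of $S$; optimizing $k$ against $\log\binom{u}{N}$ yields $|A| \ge n \leps + n \log(n/b) - O(n)$. Case B, $\lambda$ small: here most $\ell_i$ are close to $n$, so insertion times are well-approximated by simulating $A$ forward with canonical filler elements (chosen disjoint from $Y_N$); for each $x$, the first checkpoint $N + jb$ at which $A$ answers No reveals bucket$(x)$ up to $O(1)$. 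Combined with $\log\binom{n + 2\varepsilon u}{n}$ aux bits to identify $S \subseteq Y_N$ and $\log((b!)^{n/b}) = O(n \log b)$ aux bits specifying the within-bucket order, this injectively encodes $(S, \pi)$ and rearranges to the same bound.

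Substituting $b = \min\{m, n/\leps\}$ yields $|A| \ge n \leps + n \cdot \max\{\log(n/m), \log \leps\} - O(n)$, and taking $b = n/\leps$ handles the $m = \infty$ clause. The main obstacle will be choosing the threshold on $\lambda$ so that exactly one case succeeds, and in Case B carefully accounting for the two sources of noise in the bucket-reading step: the buffer (up to $m$ arbitrary answers per time step) and the false-positive budget (up to $2\varepsilon u$ spurious Yeses per time step). The two-sided constraint $b \le \min\{m, n/\leps\}$ is exactly what keeps both error sources subdominant to the information $n \log(n/b)$ being extracted: $b \le m$ aligns the bucket boundaries with the buffer, while $b \le n/\leps$ caps the cumulative false-positive load over the $n/b$ checkpoints so that the misclassifications are absorbable into the $O(n)$ slack.
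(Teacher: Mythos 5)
Your skeleton (encoding argument, persistence quantity $\lambda$, dichotomy) matches the paper's in spirit, but the two-case structure with a \emph{fixed} bucket size $b$ has a genuine gap: there is an intermediate regime of $\lambda$ in which neither case delivers the claimed bound, and no choice of threshold repairs this. Concretely, Case A encodes only $S$ and yields $|A| \ge (n+\lambda/n)\leps - O(n)$, which reaches the target $n\leps + n\log(n/b)$ only when $\lambda \gtrsim n^2\log(n/b)/\leps$. Case B must correct every element whose first-No time is displaced by more than $b$ from its nominal checkpoint; since an element displaced by $d$ contributes $d$ to $\lambda$, even the most efficient correction scheme (stars-and-bars over bucket displacements) costs about $n\logp{1+\lambda/(nb)}$ extra bits, which is $O(n)$ only when $\lambda = O(nb)$. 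With $b = n/\leps$ the two cases therefore cover $\lambda \le O(n^2/\leps)$ and $\lambda \ge n^2\log\leps/\leps$, leaving a hole (e.g.\ $\lambda = n^2\sqrt{\log\leps}/\leps$, where Case A gives only $n\leps + n\sqrt{\log\leps}$ and Case B loses a non-$O(n)$ term $\approx \frac{n}{2}\log\log\frac{1}{\varepsilon}$). The obstacle you flag at the end --- choosing the threshold so that exactly one case succeeds --- is thus not a technicality but the crux, and it cannot be resolved within a pure dichotomy.

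The paper's proof avoids this by running both mechanisms \emph{in a single encoding} at one well-chosen time $j^*$: the state $A(\sigma_{j^*})$ lets the decoder identify $n + \lambda/n$ elements of $S$ inside a set of size $O(\varepsilon u)$ (the Case-A gain, linear in $\lambda$), while the exact order residuals $\Delta(\sigma_{j^*},i)\ge 0$, which sum to at most $\lambda$, are encoded by stars and bars at cost $\logchoose{n+\lambda}{n}$ (the Case-B loss, only logarithmic in $\lambda$). Because both terms appear in the same inequality, the linear gain always compensates the logarithmic loss in the middle regime, and minimizing the resulting expression over $\lambda$ (the optimum is at $\lambda \approx n^2/\leps - n$, truncated to $\lambda \le nm$ when $m$ is small) produces exactly $n\leps + n\cdot\maxparen{\log\frac{n}{m}, \log\leps} - O(n)$. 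To salvage your write-up, replace the fixed-$b$ bucket reading in Case B with an exact stars-and-bars encoding of the residuals $\tau_i - i - n$ (capped at $\min\{n,m\}$ so they are bounded even when a fixed-randomness instance never releases an element), and fold Case A's extra-elements gain into the same inequality rather than treating it as an alternative. Two smaller points: the identity $\sum_i \ell_i = \sum_t |\{i: x_i \in Y_t\}|$ should be an inequality restricted to $Y_t \cap S$ (elements may reappear in $Y_t$ non-monotonically and $Y_t$ contains non-stream elements), and the derandomization should fix a single random string $r^*$ good for at least half of the input sequences rather than conditioning on the false-positive event per input.
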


\begin{proof}
Let $A$ be an algorithm satisfying the requirements in the statement of the theorem. The main idea of the proof is to use $A$ to encode and decode a set $S \subset U$ and a permutation $\pi$ on the set (i.e. an ordered set). Giving $S$ to $A$ as a stream, ordered by $\pi$, creates an encoding of an approximation of $S$ and $\pi$: $S$ is approximated by the set of all the elements for which $A$ answers `Yes' (denoted by $\mu_A(S)$), and $\pi$ is approximated by the number of elements needed to be added to the stream in order for $A$ to "release" each of the elements in $S$ (that is, to answer `No' on it). Then, to get an exact encoding, we encode only the elements of $S$ from within the set $\mu_A(S)$. To get an exact encoding of $\pi$ we encode only the difference between the location $i$ of each element and the actual location it has been released. The key is to find the point where $A$ best approximates $S$ and $\pi$ \emph{simultaneously}.

Denote by $A_r$ the algorithm with fixed random string $r$ and let $\mu_{A_r}(\sigma) = \{x:A_r(\sigma, x)=\mbox{`Yes'}\}$. We show that w.l.o.g.\ we can consider $A$ to be deterministic. Let $V=\{\sigma:|\sigma|=2n\}$ be the set of all sequences of $2n$ \emph{distinct} elements, and let $V(r) \subseteq V$ be the subset of inputs such that $|\mu_{A_{r}}(\sigma_i)| \le n + 2\varepsilon u$ for all $1 \le i \le 3n$. Since we assumed that for any $\sigma$ we have that $\Pr_r[\exists i\le 3n: |\mu_{A_r}(\sigma_i)| \ge n + 2\varepsilon u] \le \frac{1}{2}$ then there must exist an $r^*$ such that $|V(r^*)| \ge |V|/2$. Thus, we can assume that $A$ is deterministic and encode only sequences from $V(r^*)$. Then the encoding lower bound changes from $\log{|V|}$ to $\logp{|V|/2}=\log{|V|}-1$. This loss of 1 bit is captured by the lower order term $O(n)$ in the lower bound, and hence can be ignored.

Notice that $r^*$ need not be explicitly specified in the encoding since the decoder can compute it using the description of the algorithm $A$ (which may be part of its fixed program). From now on, we assume that $A$ is deterministic (and remove the $A_r$ notation) and assume that for any $\sigma \in V(r^*)$ we have that $\mu_A(\sigma) \le n+ 2\varepsilon u \le 3\varepsilon u$.

We now make an important definition:
$$
\ell(\sigma,x) = \min{\{ \arg\min_k{\{\exists y_1,\dots,y_k \in U:A(\sigma y_1 \cdots y_k,x)=0\}}, n, m \} }
$$
$\ell(\sigma,x)$ is the minimum number of elements needed to be added to $\sigma$ such that $A$ answers `No' on $x$. Notice that $\ell(\sigma,\cdot)$ can be computed for any set $S$ given the representation of $A(\sigma)$.

We encode any set $S$ of size $2n$ and a permutation $\pi:[2n] \rightarrow [2n]$ using $A$. After encoding $S$ we compare the encoding length to the entropy lower bound of $\FB(u,2n) + \log{((2n)!)}$. Consider applying $\pi$ on (some canonical order of) the elements of $S$ and let $x_1,\dots,x_{2n}$ be the resulting elements of $S$ ordered by $\pi$. For any $i > 2n$ let $x_{i} = x_{i-2n}$, then for any $k \ge 1$ define the sequence $\sigma_k=x_1,\dots,x_k$. Let $\phi(\sigma_k) = \mu(\sigma_k) \cap S$ and define $$\Delta(\sigma_k,i)=\ell(\sigma_k,x_i) + (k-n) - i$$
Notice that, given $A(\sigma_k)$, $\Delta(\sigma_k,i)$, $k$ and $n$ one can compute the position $i$ of the element $x_i$. Define

$$
\lambda_k = \sum\limits_{i=k-n+1}^{k}\Delta(\sigma_k,i) \mbox{, and } \lambda = \max_{n \le k \le n}\lambda_k
$$

If $m \ge n$ (or $m=\infty$) then $0 \le \lambda \le n^2$, otherwise $0 \le \lambda \le nm$
\begin{lemma}\label{foranyk}
Let $k \in [n, 2n]$ then
$$
\sum_{j=k}^{k+n-1}|\phi(\sigma_j)| \ge n^2+\lambda_k.
$$
\end{lemma}
\begin{proof}
Instead of summing over $\phi(\sigma_j)$, we sum over $x_i$ and count the number of $\phi(\sigma_j)$ such that $x_i \in \phi(\sigma_j)$. For $k-n+1 \le i \le k$ we know that $x_i \in \sigma_k(n)$ and by the definition of $\ell(\sigma_k,x_i)$ we get that $x_i \in \phi(\sigma_k), \dots , \phi(\sigma_{k+\ell(\sigma_k,x_i)-1})$. For $k+1 \le i \le k+n-1$ we know that $x_i \in \phi(\sigma_i),\dots,\phi(\sigma_{k+n-1})$. Therefore:

\begin{align*}
   \sum_{j=k}^{k+n-1}|\phi(\sigma_j)| \ge & \sum\limits_{i=k-n+1}^{k}\ell(\sigma_k,x_i) + \sum\limits_{i=k+1}^{k+n-1}\paren{k+n-i} \\
	= & \sum\limits_{i=k-n+1}^{k}\ell(\sigma_k,x_i) + \frac{n(n-1)}{2} \\
   = & \sum\limits_{i=k-n+1}^k \parensqr{\ell(\sigma_k,x_i) + k - n -i} + n^2 \\
   = & \sum\limits_{i=k-n+1}^k\Delta(\sigma_k, i) + n^2 = \lambda_k + n^2
\end{align*}
\end{proof}

By averaging, we get that for any $k$ there exist some $j \in [k,k+n-1]$ such that $|\phi(\sigma_j)| \ge n+\frac{\lambda_j}{n}$. Let $k^*$ be such that $\lambda=\lambda_{k^*}$, then we know that there exist some $j^* \in [k^*,k^*+n-1]$ such that $|\phi(\sigma_j)| \ge n+\frac{\lambda}{n}$. Note that $j^*$ satisfies $n \le j \le k^*+n-1 \le 3n$ which is in the range of indices of the false positive assumption.

We include the memory representation of $A(\sigma_{j^*})$ in the encoding. The decoder uses this to compute the set $\mu(\sigma_{j^*})$, which by the absolute false positive definition we know that $|\mu(\sigma_{j^*})| \le 3\varepsilon u$. Since $|\phi(\sigma_{j^*})| \ge n+\frac{\lambda}{n}$, we need only $\FB(3\varepsilon u,n+\frac{\lambda}{n})$ bits to encode $n+\frac{\lambda}{n}$ elements of $S$ out of them. The remaining $n-\frac{\lambda}{n}$ elements are encoded explicitly using $\FB(u,n-\frac{\lambda}{n})$ bits. This completes the encoding of $S$.

To encode $\pi$ we need the decoder to be able to extract $i$ for each $x_i$. For any $x_i \in \sigma_{j^*}(n)$ the decoder uses $A(\sigma_{j^*})$ and computes $\ell(\sigma_{j^*}, x_i)$. Now, in order for the decoder to exactly decode $i$ we need to encode all the $\Delta(\sigma_{j^*}, _i)$'s. Since $\sum\limits_{i=j^*-n+1}^{j^*}\Delta(\sigma_{j^*}, _i) = \lambda_{j^*} \le \lambda$ we can encode all the $\Delta(\sigma_{j^*}, _i)$'s using $\logchoose{n+\lambda}{n}$ bits (balls and sticks method), and the remaining elements' positions will be explicitly encoded using $n\log{n}$ bits. Denote by $|A|$ the number of bits used by the algorithm $A$. Comparing the encoding length to the entropy lower bound we get

$$
|A| + \logchoose{3\varepsilon u}{n + \frac{\lambda}{n}} + \logchoose{u}{n - \frac{\lambda}{n}} + \logchoose{\lambda + n}{n} + n\log{n} \ge \logchoose{u}{2n} + \log{((2n)!)}
$$
and therefore
$$
|A| \ge (n + \frac{\lambda}{n})\leps + (n + \frac{\lambda}{n})\log{n} + (n - \frac{\lambda}{n})\log{(n - \frac{\lambda}{n})} - n\logp{\lambda+n} - O(n)
$$
Consider two possible cases for $\lambda$. If $\lambda \le 0.9n^2$ then we get
$$
|A| \ge (n + \frac{\lambda}{n})\leps + 2n\log{n} - n\logp{\lambda+n} - O(n)
$$
The minimum of this expression, as a function of $\lambda$, is achieved at $\lambda=\frac{n^2}{\leps}-n$. If $m \ge \frac{n}{\leps} - 1$ then the minimum can be achieved and we get that
$$|A| \ge n\leps + n\log{\leps} - O(n) .$$
Otherwise, if $m < \frac{n}{\leps} - 1$ then $\lambda \le mn \le \frac{n^2}{\leps}-n$ and minimum value will be achieved at $\lambda=nm$ which yields the required lower bound:
$$|A| \ge n\leps + n\log{\frac{n}{m}} - O(n) .$$
If $0.9n^2 < \lambda \le n^2$ then $m \ge 0.9n \ge \frac{n}{\leps}$. Thus, we get that
\begin{eqnarray*}
|A| & \ge & (n + \frac{\lambda}{n})\leps - (n - \frac{\lambda}{n})\log{n} + (n - \frac{\lambda}{n})\log{(n - \frac{\lambda}{n})} - O(n)
\end{eqnarray*}

the minimum of this expression, as a function of $\lambda$ between the given range is achieved at $\lambda=0.9n^2$ which yields
$$|A| \ge n\leps + n\log{\leps} - O(n)$$
as required.
\end{proof}

In the proof, we encoded a sequence of length $2n$ and we assumed that the false positive assumption holds for any such sequence. However, the only property used was the number of bits required for encoding any possible sequence. Since the lower bound includes a $O(n)$ term, we conclude that the theorem holds even for smaller sets of sequences resulting in a larger constant hidden in the $O(n)$ term. In particular, we get the following corollary, which we use to prove \Cref{LowerBoundNumber}:
\begin{corollary}\label{corollary1}
Let $W$ be a subset of sequences of length $2n$ such that $\log{|W|} = 2n\log{u} - O(n)$. Let $A$ be an $(n,m,\varepsilon)$-Sliding Bloom Filter where $n < \varepsilon u$. If for any $\sigma \in W$ it holds that
$$\Pr[\exists i \le 3n: |\{x \in U: A(\sigma_i,x)=`Yes'\}| \ge n + 2\varepsilon u] \le \frac{1}{2}$$
then
\begin{enumerate}
\item If $m>0$ then $|A| \ge n\leps + n \cdot \maxparen{\log{\frac{n}{m}},\log{\leps}} - O(n)$
\item If $m=\infty$ then $|A| \ge n\leps + n\log{\leps} - O(n)$
\end{enumerate}
\end{corollary}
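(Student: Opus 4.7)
The plan is to rerun the encoding argument of Theorem~\ref{LoweBoundAssumptionNumber} with the set $V$ of all length-$2n$ sequences of distinct elements replaced by the hypothesized subset $W$. The key observation is that the encoding map built in that proof is defined pointwise on each individual sequence, and the only place where $V$ enters is in the entropy lower bound against which the encoding length is compared. Since $\log|W| = 2n\log u - O(n) = \log|V| - O(n)$, swapping $V$ for $W$ only enlarges the additive constant hidden in the $O(n)$ slack.

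First, I would derandomize exactly as before: for a randomized $A$, averaging the hypothesis over $\sigma \in W$ yields a fixed random string $r^*$ such that the subset $W(r^*) \subseteq W$ of sequences on which $A_{r^*}$ keeps $|\mu_{A_{r^*}}(\sigma_i)| \le n + 2\varepsilon u$ for all $i \le 3n$ has size at least $|W|/2$. Consequently $\log|W(r^*)| \ge \log|W| - 1 = 2n\log u - O(n)$, and it suffices to exhibit an injective encoding of $W(r^*)$ using essentially $|A|$ bits plus auxiliary material. Note that $r^*$ itself need not be transmitted since the decoder has the description of $A$ hard-wired.

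Second, I would apply verbatim the constructions of $\ell(\sigma,x)$, $\Delta(\sigma_k,i)$, $\lambda_k$, and $\lambda$, together with Lemma~\ref{foranyk} and its averaging consequence, to each $\sigma \in W(r^*)$. Every step of that argument uses only (a) the absolute false positive bound on the particular $\sigma$ at hand and (b) the internal structure of $A(\sigma_{j^*})$; nothing in it refers to $V$. This produces, for each $\sigma \in W(r^*)$, an encoding of length
\[
|A| + \logchoose{3\varepsilon u}{n+\tfrac{\lambda}{n}} + \logchoose{u}{n-\tfrac{\lambda}{n}} + \logchoose{\lambda + n}{n} + n\log{n}.
\]

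Finally, comparing this encoding length to the entropy bound $\log|W(r^*)| = 2n\log u - O(n)$ produces exactly the same chain of inequalities on $|A|$ as in Theorem~\ref{LoweBoundAssumptionNumber}, up to an enlarged constant inside $O(n)$. The split of the $\lambda$-optimization into the $\lambda \le 0.9n^2$ and $\lambda > 0.9n^2$ cases then proceeds identically, yielding the stated bounds in both the $m>0$ and $m=\infty$ regimes. I do not expect any genuine obstacle here: the corollary is essentially a parametric restatement of Theorem~\ref{LoweBoundAssumptionNumber}, and the $O(n)$ slack in that theorem was designed to absorb precisely this kind of constant loss. The only point worth double-checking is that the encoding remains injective on $W(r^*)$, which it does because the decoder reconstructs $\sigma$ purely from the transmitted bits using deterministic properties of $A_{r^*}$, independent of whether the source of $\sigma$ is $V$ or $W(r^*)$.
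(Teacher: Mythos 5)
Your proposal is correct and matches the paper's own justification, which is exactly the observation that the encoding of Theorem~\ref{LoweBoundAssumptionNumber} is defined pointwise on sequences and the set $V$ enters only through the entropy lower bound, so replacing $\log|V|$ by $\log|W| = 2n\log u - O(n)$ costs only a larger constant inside the $O(n)$ term. The derandomization to $W(r^*)$ with $|W(r^*)| \ge |W|/2$ is also handled the same way as in the paper.
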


\subsection{Removing the Absolute False Positive Assumption}\label{RemoveAssumption}
We show how we can remove the `absolute false positive' assumption while maintaining the same lower bound as in the original theorem. Towards this end, we construct a new data structure $A'$ which uses multiple instances of $A$. The new data structure $A'$ will work only on a specific subset of all inputs, however, we show that the number of such inputs is approximately the same and hence the same entropy lower bound holds, up to a larger constant hidden in the $O(n)$ term. Moreover, we show that on these inputs, the absolute false positive assumption holds and thus we can apply \Cref{corollary1} on $A'$. We restate and prove the theorem.

\begin{thm:LowerBound}
Let $A$ be an $(n,m,\varepsilon)$-Sliding Bloom Filter where $n < \varepsilon u$, then
\begin{enumerate}
\item If $m>0$ then $|A| \ge n\leps + n \cdot \maxparen{\log{\frac{n}{m}},\log{\leps}} - O(n)$
\item If $m=\infty$ then $|A| \ge n\leps + n\log{\leps} - O(n)$
\end{enumerate}
\end{thm:LowerBound}
\begin{proof}
In order to prove the result we need to reduce the probability of having many false positives to roughly $1/n$. To obtain this sort of bound, we partition the sequence into several subsequences on which we apply the original Sliding Bloom Filter independently. The motivation of using multiple instances of $A$ is to introduce independence between different sets of inputs.

Let $U'$ be a universe composed of $w$ copies of $U$, where $w$ will be determined later. We denote each copy by $U_i$ and we call it a world. Each world is of size $u$ and $U'$ is of size $u'=wu$. We consider only sequences such that each chunk of $w$ elements in the sequence contain exactly one element from each world. These are the only sequences the are valid for $A'$, and we denote them by $W$. Let $A_1,\dots,A_w$ be $w$ independent instances of the algorithm $A$ with parameters $(n,m,\varepsilon)$.

$A'$ works by delegating each input to the corresponding instance of $A$. On input $x \in U_i$ we insert $x$ into $A_i$, and on query $x \in U_i$ we query $A_i$ and return its answer. 
\begin{claim}
Let $n'=(n-1)w$ and $m'=(m+2)w$. Then, $A'$ is an $(n',m',\varepsilon)$-Sliding Bloom Filter for any sequence $\sigma \in W$.
\end{claim}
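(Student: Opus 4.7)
The plan: For any $\sigma \in W$, define $\sigma^{(i)}$ to be the subsequence of $\sigma$ consisting of its $U_i$-elements. Since each chunk of $w$ consecutive elements of $\sigma$ contains exactly one element from each world, $\sigma^{(i)}$ is well-defined, and the algorithm $A'$ effectively feeds $\sigma^{(i)}$ into the independent instance $A_i$. Moreover, $A'$ answers any query $x \in U_i$ with $A_i(x)$. Thus, to verify that $A'$ is an $(n', m', \varepsilon)$-Sliding Bloom Filter on $\sigma$, it suffices to verify, for each $i$, the two conditions of an $(n, m, \varepsilon)$-Sliding Bloom Filter for $A_i$ on $\sigma^{(i)}$. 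The entire task reduces to translating the windows of sizes $n'$ and $n' + m'$ in $\sigma$ into windows of sizes at most $n$ and $n + m$, respectively, in each $\sigma^{(i)}$.

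For the no-false-negatives condition I would argue as follows. Suppose $x \in \sigma_t(n')$ lies at position $t - j$ for some $0 \le j \le n' - 1 = (n-1)w - 1$, and say $x \in U_i$. The interval of positions $[t - j, t]$ touches at most $n$ of the $w$-chunks (up to one partial chunk at each end, plus at most $n - 2$ complete chunks in between, since $(n-1)w - 1$ positions cannot cover more than $n$ chunks). Each chunk contributes exactly one $U_i$-element, so $x$ is among the $n$ most recent $U_i$-elements in $\sigma_t^{(i)}$, and the $(n, m, \varepsilon)$-Sliding Bloom Filter guarantee for $A_i$ gives $\Pr[A'(x) = \mbox{`Yes'}] = \Pr[A_i(x) = \mbox{`Yes'}] = 1$.

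For the bounded-false-positive-rate condition I would prove the contrapositive: the last $n + m$ $U_i$-elements of $\sigma_t^{(i)}$ all lie within the last $n' + m'$ positions of $\sigma_t$. Let $c_t = \lceil t / w \rceil$ be the chunk containing position $t$, and let $c_0 \in \{c_t - 1, c_t\}$ be the chunk of the most recent $U_i$-element seen by time $t$. The $k$-th most recent $U_i$-element lies in chunk $c_0 - k + 1$, so its position in $\sigma$ is at least $(c_0 - k)w + 1 \ge (c_t - k - 1)w + 1 \ge t - (k + 1)w + 1$. For $k \le n + m$, this position is at least $t - (n + m + 1)w + 1 = t - (n' + m') + 1$, i.e., the element lies inside $\sigma_t(n' + m')$. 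By contraposition, if $x \notin \sigma_t(n' + m')$ then $x$ is not among the last $n + m$ $U_i$-elements of $\sigma_t^{(i)}$, and the guarantee of $A_i$ yields $\Pr[A'(x) = \mbox{`Yes'}] = \Pr[A_i(x) = \mbox{`Yes'}] \le \varepsilon$.

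The main obstacle is purely bookkeeping: carefully matching the "last $K$" window in $\sigma$ against the corresponding window in each $\sigma^{(i)}$ while accounting for up to one partial $w$-chunk at each end. The constants $n' = (n-1)w$ and $m' = (m+2)w$ are chosen precisely to absorb these partial chunks — the $-1$ on the recent side and the slack on the older side provide exactly the margin needed so that the reduction to the per-world guarantees goes through with no loss in $\varepsilon$.
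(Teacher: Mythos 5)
Your proof is correct and takes essentially the same route as the paper's: delegate each query to the per-world instance $A_i$, then show that a window of $n'$ (resp.\ $n'+m'$) consecutive positions of $\sigma$ contains at most $n$ (resp.\ at least $n+m$) elements of world $U_i$, so the per-instance guarantees transfer. Your explicit chunk bookkeeping is just a more careful rendering of the paper's one-line arithmetic $n'/w+1=n$ and $(n'+m')/w-1=n+m$.
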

\begin{proof}
We show that the two properties hold for any time $t$ in the sequence. Let $\sigma \in W$, let $x \in U'$ such that $x \in U_i$, and let $\sigma^i$ be the sequence $\sigma$ limited to elements in $U_i$. $A'$ answers by $A_i$ and thus $\Pr[A'(\sigma, x)=1]=\Pr[A_i(\sigma^i, x)=1]$. We analyze $\Pr[A_i(\sigma^i, x)=1]$ in the different cases.

Since each chuck of $w$ elements contain one element from each world, each data structure will contain at most $n'/w + 1=n$ elements of the current window. Moreover, each element of the window will be present in one of the $A_i$'s. If $x \in \sigma_t(n')$ is in the current window, then since each $A_i$ has no false negatives we have $\Pr[A_i(\sigma^i,x)= \mbox{`Yes'}] = 1$.

Now suppose $x \notin \sigma_t(n'+m')$ is not in the current window and not in the $m'$ element beforehand. If $x \notin \sigma_t$ then, by the false positive probability of $A_i$, we have that $\Pr[A_i(\sigma^i, x)= \mbox{`Yes'}] \le \varepsilon$. Otherwise, $x \in \sigma_t$ but $x \notin \sigma_t(n'+m')$, and therefore at least $n'+m'$ elements have arrived after $x$. Thus, each $A_i$ has received at least $\frac{n'+m'}{w}-1=n+m$ elements, and so $x \notin \sigma^i_t(n+m)$. Since each $A_i$ is an $(n,m,\varepsilon)$-Sliding Bloom Filter we have that $\Pr[A_i(\sigma_i, x)= \mbox{`Yes'}] \le \varepsilon$
\end{proof}

We have shown that $A'$ satisfies that properties of an $(n',m',\varepsilon)$-Sliding Bloom Filter for sequences of $W$. Now, we show that the false positives assumption holds for $A'$ under sequences of $W$.

For any sequence $\sigma \in W$, for all $1 \le i \le w$, let $X_i$ be a random variable indicating the number of false positives of $A_i$ in $U_i$. Let $X=\sum_{i=1}^{w}X_i$ be the total number of false positives of $A'$. We bound the probability that $X$ is too high.
\begin{claim}
For $w=\frac{\logp{6n'}}{\varepsilon^2}$ we have that $\Prp{X > 3\varepsilon u'} \le \frac{1}{6n'}$
\end{claim}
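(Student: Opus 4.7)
The strategy is to exploit the independence of the $w$ instances $A_1,\dots,A_w$ and concentrate \emph{across} worlds via Hoeffding's inequality. Within a single world the events ``$x$ is a false positive'' for different $x\in U_i$ are correlated through the internal randomness of $A_i$, so we cannot get concentration inside one world by summing indicator variables. Across worlds, however, we do have genuine independence because the $A_i$ are instantiated with independent random strings, and we can treat each $X_i$ as an independent random variable with range $[0,u]$.

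\textbf{Bounding the mean.} For every $x\in U_i$ that is outside the current window of $\sigma^i$, the $(n,m,\varepsilon)$-Sliding Bloom Filter property of $A_i$ gives $\Pr[A_i(\sigma^i,x)=\text{`Yes'}]\le \varepsilon$. Summing over $x\in U_i$ by linearity of expectation yields $\Ep{X_i}\le \varepsilon u$, and therefore
\[
\Ep{X} \;=\; \sum_{i=1}^{w}\Ep{X_i}\;\le\; \varepsilon w u \;=\; \varepsilon u'.
\]
In particular, $\{X>3\varepsilon u'\}\subseteq\{X-\Ep{X}>2\varepsilon u'\}$, so it suffices to bound the right-hand event.

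\textbf{Hoeffding and the choice of $w$.} Since $X_1,\dots,X_w$ are independent and each lies in $[0,u]$, Hoeffding's inequality gives
\[
\Prp{X-\Ep{X}\ge 2\varepsilon u'} \;\le\; \exp\!\paren{-\frac{2(2\varepsilon w u)^2}{w u^{2}}} \;=\; \exp(-8\varepsilon^{2} w).
\]
Substituting $w=\logp{6n'}/\varepsilon^{2}$ makes this at most $\exp(-8\logp{6n'})$, which is comfortably below $1/(6n')$, establishing the claim with room to spare. The only real subtlety is the one noted at the start: because the within-world false-positive events may be arbitrarily correlated, the concentration must be done at the granularity of the whole $X_i$ using its range $[0,u]$, rather than on individual indicators; the $w$ independent copies are introduced precisely to enable this.
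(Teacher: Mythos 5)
Your proof is correct and follows essentially the same route as the paper: bound $\Ep{X_i}\le\varepsilon u$ per world, then concentrate the sum across the $w$ independent instances. The paper phrases the concentration step as Azuma's inequality applied to the martingale of partial sums $\frac{1}{u}\sum_j(X_j-\Ep{X_j})$, whereas you invoke Hoeffding directly on the independent variables $X_i\in[0,u]$; since the instances are genuinely independent these are interchangeable (yours even gives a slightly better exponent, $e^{-8\varepsilon^2 w}$ versus the paper's $e^{-4\varepsilon^2 w}$), and both comfortably yield the $1/(6n')$ bound for $w=\logp{6n'}/\varepsilon^2$.
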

\begin{proof}
Define $Y_i=\frac{1}{u}\sum\limits_{j=1}^{i}\parensqr{X_i-\E[X_i]}$ and $Y_0=0$. Since $\E[X_i]\le \varepsilon u$ we get that
$$\Prp{X > 3\varepsilon u'} = \Prp{\sum_{i=1}^{w}X_i - \sum_{i=1}^{w}\E[X_i] > 3\varepsilon u' - \sum_{i=1}^{w}\E[X_i]} \le \Prp{Y_w > 2\varepsilon w}
$$
To bound $Y_w$, we use Azuma's inequality (in the form of \cite[Theorem 7.2.1]{AlonS04}). First, note that $\{Y_i:i=1,\dots,w\}$ is a martingale:
$$
\Ep{Y_{i+1}-Y_i|Y_1,\dots,Y_i} = \frac{1}{u}\Ep{X_{i+1}-\E[X_i]|Y_1,\dots,Y_i}=\frac{1}{u}\Ep{X_i - \E[X_i]}=0.
$$
Moreover, we have $|Y_{i+1}-Y_i|=|\frac{1}{u}\paren{X_{i+1}-\E[X_i]}| \le 1$. Thus, by Azuma's inequality we get

$$
\Prp{X > 3\varepsilon u'} \le \Prp{Y_w > 2\varepsilon w} \le e^{-4\varepsilon^2 w} \le \frac{1}{6n'}
$$

which holds for $w \ge \frac{\logp{6n'}}{\epsilon^2}$.
\end{proof}

\begin{claim}
The false positive assumption holds for $A'$ for valid sequences. Namely, for any sequence $\sigma \in W$ it holds that
$$
\Pr[\exists i \le 3n': |\{x \in U: A'(\sigma_i,x)=`Yes'\}| \ge n' + 3\varepsilon u'] \le \frac{1}{2}
$$
\end{claim}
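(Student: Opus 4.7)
The plan is to lift the preceding single-time tail bound on $X$ to a uniform-in-time statement by a direct union bound over the $3n'$ relevant times. The first step is to decompose the set of `Yes' answers: for any $\sigma \in W$ and any $i$, the set $\{x \in U' : A'(\sigma_i, x) = \text{`Yes'}\}$ partitions into the current window $\sigma_i(n')$ (of size at most $n'$, all receiving `Yes' by the no-false-negatives guarantee) and the false positives, i.e., the `Yes' answers on elements outside the window. Letting $F_i$ denote the number of false positives at time $i$, this gives $|\{x : A'(\sigma_i, x) = \text{`Yes'}\}| \le n' + F_i$, so the event in question is contained in $\{\exists i \le 3n': F_i \ge 3\varepsilon u'\}$, and it suffices to bound the latter.

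The second step is to apply the preceding claim pointwise. That claim is really a single-time statement: its Azuma/martingale argument operates over the $w$ independent per-world false-positive counts at one fixed time, and uses only that each $A_j$ is an $(n, m, \varepsilon)$-Sliding Bloom Filter (so per-world expected false positives are at most $\varepsilon u$) and that the $w$ instances are independent. Consequently it gives $\Pr[F_i \ge 3\varepsilon u'] \le 1/(6n')$ for every individual $i \le 3n'$, with the same choice $w = \lceil \log(6n')/\varepsilon^2 \rceil$. A union bound over the $3n'$ values of $i$ then yields
$$\Pr[\exists i \le 3n' : F_i \ge 3\varepsilon u'] \le 3n' \cdot \frac{1}{6n'} = \frac{1}{2},$$
which, combined with the containment from the first step, is exactly the statement of the claim.

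There is no real obstacle here; the argument is essentially routine. The entire motivation for setting $w = \lceil \log(6n')/\varepsilon^2 \rceil$ in the previous claim was precisely to drive the single-time failure probability down to $1/(6n')$, which is the union-bound budget needed to cover $3n'$ times while preserving a total failure probability of at most $1/2$. The only mildly subtle point is the initial decomposition of `Yes' answers into window members plus false positives, which is what translates a bound on $F_i$ into a bound on $|\{x : A'(\sigma_i, x) = \text{`Yes'}\}|$ of the form required by the absolute false positive assumption, so that \Cref{corollary1} can subsequently be applied to $A'$.
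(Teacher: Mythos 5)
Your proof is correct and follows the same route as the paper: apply the single-time bound $\Pr[X > 3\varepsilon u'] \le 1/(6n')$ from the preceding claim at each of the $3n'$ time steps and union bound. The only difference is that you make explicit the decomposition of the `Yes' set into at most $n'$ window elements plus false positives, which the paper leaves implicit when passing from the bound on $X$ to the bound on $|\{x : A'(\sigma_i,x)=\text{`Yes'}\}|$.
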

\begin{proof}
By the previous claim, we know that for any $i$: $\Pr[|\{x \in U: A'(\sigma_i,x)=`Yes'\}| \ge n' + 3\varepsilon u'] \le \frac{1}{6n'}$. Using a union bound we get that
$$
\Pr[\exists i \le 3n': |\{x \in U: A'(\sigma_i,x)=`Yes'\}| \ge n' + 3\varepsilon u'] \le 3n' \cdot \frac{1}{6n'} = \frac{1}{2}.
$$
\end{proof}

We have shown that the false positive assumption holds for all sequences in $W$. To apply \Cref{corollary1} we are left to show that the entropy lower bound is large enough, namely:
\begin{claim}
$\log{|W|}= 2n'\log{u'} - O(n')$
\end{claim}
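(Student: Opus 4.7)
The plan is to count $|W|$ directly by exploiting the chunk structure of valid sequences. Since $2n' = 2(n-1)w$, any $\sigma \in W$ splits uniquely into $2(n-1)$ consecutive blocks of length $w$, and the chunk constraint applies to each block independently. A valid block is a length-$w$ sequence containing exactly one element from each world $U_1,\dots,U_w$; the number of such blocks is $w!\cdot u^w$, obtained by choosing a permutation assigning worlds to positions ($w!$ ways) and then picking one of $u$ elements from the assigned world at each position ($u^w$ ways). Since the chunks are independent, $|W| = (w!\cdot u^w)^{2(n-1)}$.

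Taking logarithms,
\[
\log|W| = 2(n-1)\log(w!) + 2(n-1)w\log u = 2(n-1)\log(w!) + 2n'\log u.
\]
By Stirling's formula, $\log(w!) = w\log w - w\log e + O(\log w)$, so
\[
2(n-1)\log(w!) = 2n'\log w - 2n'\log e + O\bigl((n-1)\log w\bigr).
\]
Combining, and using $u' = wu$,
\[
\log|W| = 2n'(\log w + \log u) - 2n'\log e + O\bigl((n-1)\log w\bigr) = 2n'\log u' - O(n'),
\]
where the dominant error term $2n'\log e$ is linear in $n'$ and the Stirling correction is absorbed since $(n-1)\log w \le (n-1)w = n'$.

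There is no real obstacle here: the argument is a clean combinatorial count followed by a one-line Stirling estimate. The main point worth flagging is that the $2n'\log e$ term arising from the $w!$ factor per chunk is genuinely $\Theta(n')$, which is exactly what the ``$-O(n')$'' in the statement is designed to absorb; the sub-leading Stirling correction is swallowed harmlessly into the same term. Note also that distinctness of elements \emph{within} a chunk is automatic (the worlds are disjoint), so no further combinatorial care is needed to apply \Cref{corollary1} to $A'$ on $W$.
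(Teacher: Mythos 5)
Your chunk-by-chunk count is the same basic approach as the paper's, and your Stirling bookkeeping is fine, but the formula $|W|=(w!\cdot u^w)^{2(n-1)}$ counts the wrong set: it allows the same element of a world $U_i$ to recur in different chunks. The set $W$ that is fed into \Cref{corollary1} must consist of sequences of $2n'$ \emph{distinct} elements, since that corollary inherits the encoding argument of \Cref{LoweBoundAssumptionNumber}, which encodes each sequence as a set $S$ of size $2n$ together with a permutation $\pi$ (the set $V$ there is explicitly "sequences of $2n$ distinct elements"). You correctly flag within-chunk distinctness as automatic, but cross-chunk distinctness is not, and it is required. With the distinctness requirement your product formula is only an \emph{upper} bound on $|W|$, whereas the direction of the claim that actually matters is the lower bound $\log|W|\ge 2n'\log u' - O(n')$ (the hypothesis of \Cref{corollary1} is that $W$ is large enough to carry the entropy argument), so as written the argument does not establish what is needed.

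The repair is short. Counting with distinctness, the $j$-th chunk (for $j=0,\dots,2(n-1)-1$) contributes $w!\,(u-j)^w$ choices, since $j$ elements of each world are already used; hence $|W| = (w!)^{2n'/w}\bigl(u(u-1)\cdots(u-2(n-1)+1)\bigr)^w \ge (w!)^{2n'/w}\bigl((1-2\varepsilon)u\bigr)^{2n'}$, using $2(n-1) < 2\varepsilon u$. The loss relative to your $u^{2n'}$ is $2n'\log\frac{1}{1-2\varepsilon} = O(n')$, which the error term absorbs. This corrected count is exactly the paper's $\binom{u}{2n'/w}^w\bigl(\frac{2n'}{w}!\bigr)^w(w!)^{2n'/w}$, just written as a product over chunks rather than by first choosing the element sets per world; with that one additional line your computation goes through and matches the paper's.
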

\begin{proof}
We count the number of possible sequences in $W$ of length $2n'$. First we need to choose $2n'/w$ elements from each of the $w$ worlds. Then, for each $2n'/w$ chosen elements, we divide them between the $2n'/w$ chucks of the sequence, and finally we count all possible orderings of each chuck. Altogether we get:
$$
{u \choose \frac{2n'}{w}}^w \cdot \paren{\frac{2n'}{w}!}^w \cdot \paren{w!}^{\frac{2n'}{w}}
$$

The entropy lower bound is:
$$
2n'\logp{u'/2n'} + O(n') + 2n'\logp{2n'/w} - 2n' + o(n') + 2n'\log{w} - 2n' + o(n') = $$
$$
 2n'\log{u'} - O(n')
$$
\end{proof}

Let $|A'|$ be the memory consumption of $A'$. We can now apply \Cref{corollary1} on $A'$ with the set of sequences $W$ and parameters $n',m',u'$ and get
$$
|A'| \ge n'\leps + n' \cdot \maxparen{\log{\frac{n'}{m'}},\log{\leps}} - O(n').
$$
Since $|A'|=\sum_{i=1}^{w}|A_i|$ we get that there exist some $i$ such that
$$
|A_i| \ge n'/w\leps + n'/w \cdot \maxparen{\log{\frac{n'/w}{m'/w}},\log{\leps}} - O(n'/w) = 
$$
$$
n\leps + n \cdot \maxparen{\log{\frac{n}{m}},\log{\leps}} - O(n) 
$$
Since $A_i$ is an $(n,m,\varepsilon)$-Sliding Bloom Filter we get the desired lower bound.
\end{proof}

\section{Acknowledgments}
We thank Ilan Komargodski, Tal Wagner and the anonymous referees for many useful comments.

\bibliographystyle{amsalpha}
\bibliography{SlidingBloomFilter}

\newcommand{\etalchar}[1]{$^{#1}$}
\providecommand{\bysame}{\leavevmode\hbox to3em{\hrulefill}\thinspace}
\providecommand{\MR}{\relax\ifhmode\unskip\space\fi MR }
\providecommand{\MRhref}[2]{%
  \href{http://www.ams.org/mathscinet-getitem?mr=#1}{#2}
}
\providecommand{\href}[2]{#2}
\begin{thebibliography}{MAEA05}

\bibitem[ANS10]{ArbitmanNS10}
Yuriy Arbitman, Moni Naor, and Gil Segev, \emph{Backyard cuckoo hashing:
  Constant worst-case operations with a succinct representation}, FOCS, 2010,
  pp.~787--796.

\bibitem[AS11]{AlonS04}
Noga Alon and Joel~H. Spencer, \emph{The probabilistic method}, Wiley Series in
  Discrete Mathematics and Optimization, Wiley, 2011.

\bibitem[Blo70]{Bloom70}
Burton~H. Bloom, \emph{Space/time trade-offs in hash coding with allowable
  errors}, ACM Press \textbf{13} (1970), no.~7, 422--426.

\bibitem[BM03]{BroderM03}
Andrei~Z. Broder and Michael Mitzenmacher, \emph{Survey: Network applications
  of {B}loom filters: A survey}, Internet Mathematics \textbf{1} (2003), no.~4,
  485--509.

\bibitem[CFG{\etalchar{+}}78]{CarterFGMW78}
Larry Carter, Robert~W. Floyd, John Gill, George Markowsky, and Mark~N. Wegman,
  \emph{Exact and approximate membership testers}, STOC, 1978, pp.~59--65.

\bibitem[CLF04]{ChangLF04}
Francis Chang, Kang Li, and Wu{-}chang Feng, \emph{Approximate caches for
  packet classification}, INFOCOM, 2004.

\bibitem[CW79]{CarterW79}
J~Lawrence Carter and Mark~N Wegman, \emph{Universal classes of hash
  functions}, Journal of computer and system sciences \textbf{18} (1979),
  no.~2, 143--154.

\bibitem[Dem07]{Dem07}
Erik Demaine, \emph{Lecture notes for the course {"}{A}dvanced data
  structures{"}}, available at
  http://courses.csail.mit.edu/6.851/spring07/scribe/lec21.pdf (2007).

\bibitem[DGIM02]{DatarGIM02}
Mayur Datar, Aristides Gionis, Piotr Indyk, and Rajeev Motwani,
  \emph{Maintaining stream statistics over sliding windows}, SIAM Journal on
  Computing \textbf{31} (2002), no.~6, 1794--1813.

\bibitem[DP08]{DietzfelbingerP08}
Martin Dietzfelbinger and Rasmus Pagh, \emph{Succinct data structures for
  retrieval and approximate membership}, ICALP, 2008, pp.~385--396.

\bibitem[DR06]{DengR06}
Fan Deng and Davood Rafiei, \emph{Approximately detecting duplicates for
  streaming data using stable {B}loom filters}, SIGMOD, 2006, pp.~25--36.

\bibitem[FCAB00]{FanCAB00}
Li~Fan, Pei Cao, Jussara~M. Almeida, and Andrei~Z. Broder, \emph{Summary cache:
  a scalable wide-area web cache sharing protocol}, IEEE/ACM Transactions on
  Networking \textbf{8} (2000), no.~3, 281--293.

\bibitem[LP10]{LovettP10}
Shachar Lovett and Ely Porat, \emph{A lower bound for dynamic approximate
  membership data structures}, FOCS, 2010, pp.~797--804.

\bibitem[MAEA05]{MetwallyAA05}
Ahmed Metwally, Divyakant Agrawal, and Amr El~Abbadi, \emph{Duplicate detection
  in click streams}, Proceedings of the 14th international conference on World
  Wide Web, ACM Press, 2005, pp.~12--21.

\bibitem[PPR05]{PaghPR05}
Anna Pagh, Rasmus Pagh, and S.~Srinivasa Rao, \emph{An optimal {B}loom filter
  replacement}, SODA, 2005, pp.~823--829.

\bibitem[PSW13]{PaghSW12}
Rasmus Pagh, Gil Segev, and Udi Wieder, \emph{How to approximate a set without
  knowing its size in advance}, arXiv report 1304.1188, to appear in FOCS
  (2013).

\bibitem[Tho11]{Thorup11}
Mikkel Thorup, \emph{Timeouts with time-reversed linear probing}, INFOCOM,
  2011, pp.~166--170.

\bibitem[TRL12]{TarkomaRL12}
Sasu Tarkoma, Christian~Esteve Rothenberg, and Eemil Lagerspetz, \emph{Theory
  and practice of {B}loom filters for distributed systems}, IEEE Communications
  Surveys and Tutorials \textbf{14} (2012), no.~1, 131--155.

\bibitem[Yoo10]{Yoon10}
MyungKeun Yoon, \emph{Aging {B}loom filter with two active buffers for dynamic
  sets}, IEEE Transactions on Knowledge and Data Engineering \textbf{22}
  (2010), no.~1, 134--138.

\bibitem[ZG08]{ZhangG08}
Linfeng Zhang and Yong Guan, \emph{Detecting click fraud in pay-per-click
  streams of online advertising networks}, ICDCS, 2008, pp.~77--84.

\end{thebibliography}

\end{document}